\newtheorem{Def}{Definition}[section]
\newtheorem{Prp}[Def]{Proposition}
\newtheorem{Example}[Def]{Example}
\newcommand{\1}{\mbox{\rm 1 \hspace{-1.05 em} 1}}
\newcommand{\Sl}{\mbox{$\prec \!\!$ \nolinebreak}}
\newcommand{\Sr}{\mbox{\nolinebreak $\succ$}}
\newcommand{\fnsep}{\textsuperscript{,}}
\begin{document}

\title{A Quasi-local, Functional Analytic Detection Method for \\ Stationary Limit Surfaces of Black Hole Spacetimes \vspace{0.1cm}}

\author{Christian R\"oken\vspace{0.25cm}} 

\email{croeken@uni-bonn.de}

\affiliation{Lichtenberg Group for History and Philosophy of Physics - Institute of Philosophy, University of Bonn, 53113 Bonn, Germany \vspace{0.25cm}}

\affiliation{Department of Geometry and Topology - Faculty of Science, University of Granada, 18071 Granada, Spain \vspace{0.25cm}}

\date{December 2022 / December 2025}

\begin{abstract}
\vspace{0.4cm} \noindent \textbf{\footnotesize ABSTRACT.} \hspace{0.04cm} We present a quasi-local, functional analytic method to locate and invariantly characterize the stationary limit surfaces of black hole spacetimes with stationary regions. The method is based on ellipticity-hyperbolicity transitions of the Dirac, Klein--Gordon, Maxwell, and Fierz--Pauli Hamiltonians defined on spacelike hypersurfaces of such black hole spacetimes, which occur only at the locations of stationary limit surfaces and can be ascertained from the behaviors of the principal symbols of the Hamiltonians. Therefore, since it relates solely to the effects that stationary limit surfaces have on the time evolutions of the corresponding elementary fermions and bosons, this method is profoundly different from the usual detection procedures that employ either scalar polynomial curvature invariants or Cartan invariants, which, in contrast, make use of the local geometries of the underlying black hole spacetimes. As an application, we determine the locations of the stationary limit surfaces of the Kerr--Newman, Schwarzschild--de Sitter, and Taub--NUT black hole spacetimes. Finally, we show that for black hole spacetimes with static regions, our functional analytic method serves as a quasi-local event horizon detector and gives rise to relational concepts of event horizons and black hole entropy. 
\end{abstract}

\setcounter{tocdepth}{2}

\vspace{0.1cm}

\maketitle

\tableofcontents

\section{Introduction} \label{SectionI}

\noindent In classical general relativity, the canonical key constituents associated with a black hole spacetime are a (nonremovable) curvature singularity at the core of the black hole, an (outer) event horizon that forms the boundary surface of the black hole with the exterior spacetime, and an ergosphere, which is a region where all observers are forced to co-rotate with the black hole, and therefore cannot remain stationary (see, e.g., \cite[Chapter 9.3]{HawkingEllis} and \cite[Chapters 12.1 and 12.3]{Wald}). While the detection of such a curvature singularity is standardly carried out by either identifying the infinities of certain scalar curvature invariants (e.g., \cite{EllisSchmidt, Thorpe}) or by examining the geodesic completeness of a given black hole spacetime (e.g., \cite{Penrose1, Geroch1, Geroch2}), the detection of an event horizon or an ergosurface, i.e., the outermost boundary of an ergosphere, is more subtle. 

This is because the usual notions of a black hole event horizon and of an ergosurface are global, observer-independent concepts that are based on conformal infinity of the black hole spacetime, and are thus teleological in nature (see, e.g., \cite{Booth}). To be more precise, the (outer) black hole event horizon $\mathfrak{H}$ of an asymptotically flat black hole spacetime $(\mathfrak{M}, \boldsymbol{g})$ may be defined as the future boundary of the chronological past of future null infinity $\partial I^-(\mathscr{I}^+) \cap \mathfrak{M}$, that is, the future boundary of the region from which one can reach future null infinity along a future-directed, timelike curve \cite{Penrose2}. (For equivalent notions and for a definition of an event horizon in a black hole spacetime that is not asymptotically flat see \cite{GibbonsHawking}.) The ergosurface $\mathcal{E}$ of an asymptotically flat black hole spacetime, on the other hand, is determined by the set $\bigl\{\boldsymbol{x} \in \mathfrak{M} \, \big| \, \boldsymbol{g}(\boldsymbol{K}, \boldsymbol{K})_{| \boldsymbol{x}} = 0\bigr\}$ complying with the usual condition for stationary limits, where $\boldsymbol{K}$ is a unique time-translational Killing vector field that is timelike near spacelike infinity~$i^0$.

To avoid the problem of localization that comes along with such global concepts, one can instead of, e.g., the former event horizon concept employ the notion of a quasi-local black hole horizon, for instance, a trapping horizon \cite{Hayward}, an isolated horizon \cite{ABDFKLW, AshtekarKrishnan2}, or a dynamical horizon \cite{AshtekarKrishnan, AshtekarKrishnan2}, and the associated quasi-local detection methods. These quasi-local black hole horizons preserve all essential and un\-pro\-blematic features of the global black hole event horizon concepts, but they require neither the knowledge of spacetime as a whole nor the condition of asymptotic flatness. Accordingly, they can be used to quasi-locally characterize and distinguish black holes. (For a comprehensive review on the various notions of quasi-local black hole horizons, we again refer to \cite{Booth}.) The typical methods for the detection of these quasi-local black hole horizons evaluate the zeros of certain scalar polynomial curvature invariants \cite{AbdelqaderLake, Lake, McNuttPage,PageShoom, TavlayanTekin} or of Cartan invariants \cite{BrooksEtAl}, as these invariants are nonzero away from the horizons but vanish on the horizons. (Traversing from the outside to the inside of these surfaces, they change their sign.) The scalar polynomial curvature invariants found in \cite{KarlhedeEtAl, GassEtAl}, on the other hand, vanish on stationary limit surfaces. Thus, aside from the event horizons of black hole spacetimes with static regions, they can also be used to locally detect the ergosurfaces of black hole spacetimes with stationary, rotational regions. The main field of application of the above invariants is to be found in numerical relativity, where they are frequently employed as tools to track quasi-local horizons in the numerical evolution of black holes that have settled down to becoming either stationary or near-stationary, providing the horizons' exact or approximate locations on spacelike slices of the black hole spacetimes, respectively (see, e.g., \cite{PetersColeySchnetter} for an application to black hole mergers). 

In the present paper, we introduce a quasi-local, functional analytic method to detect the stationary limit surfaces of black hole spacetimes with stationary regions.\footnote{A region $\mathcal{U} \subseteq \mathfrak{M}$ of a spacetime $(\mathfrak{M}, \boldsymbol{g})$ is called stationary if it admits a unique time-translational Killing vector field $\boldsymbol{K}$ that is timelike everywhere in $\mathcal{U}$, i.e., $\boldsymbol{g}(\boldsymbol{K}, \boldsymbol{K})_{| \boldsymbol{x}} > 0$ for all $\boldsymbol{x} \in \mathcal{U}$.} This detection method is different from the above quasi-local, differential geometric detection methods in so far as, instead of the local geometries of the black hole spacetimes, it relies on the effects that stationary limit surfaces have on the time evolutions of elementary spin-$\tfrac{1}{2}$ fermions and elementary spin-$0$, spin-$1$, and spin-$2$ bosons. The method is based on the observation that the Dirac, Klein--Gordon, Maxwell, and Fierz--Pauli Hamiltonians on spacelike hypersurfaces of black hole spacetimes with stationary regions undergo ellipticity-hyperbolicity transitions precisely at the locations of stationary limit surfaces, such as the outer and inner ergosurfaces of the Kerr--Newman black hole spacetime. (For the definitions of the different Hamiltonians and the notions of ellipticity and hyperbolicity see Section \ref{SectionII}.) Accordingly, one can locate stationary limit surfaces of black hole spacetimes with stationary regions by analyzing the determinantial polynomials of the principal symbols of these Hamiltonians. This is worked out in detail in Section \ref{SectionIII}, where, as an application, we also use our detection method to locate the stationary limit surfaces of the Kerr--Newman, Schwarzschild--de Sitter, and Taub--NUT black hole spacetimes. We point out that the functional analytic detection method has the advantage that its computational complexity in numerical algorithms is in general substantially lower compared to those of the differential geometric detection methods as, instead of an (at most) fourth-order nonlinear system of partial differential equations, only a quadratic polynomial of three indeterminates with possibly transcendental algebraic conditions on its coefficients has to be evaluated. Finally, since for black hole spacetimes with static regions the stationary limit surfaces coincide with the event horizons, our method can be employed, at least in this case, to quasi-locally detect event horizons. This, as well as relational concepts of event horizons and black hole entropy for black hole spacetimes with static regions, is discussed in Section \ref{SectionIV}.

\section{Preliminaries} \label{SectionII}

\noindent We first specify the differential geometric setting. We let $(\mathfrak{M}, \boldsymbol{g})$ be a smooth, connected, oriented and time-oriented Lorentzian $4$-manifold with metric signature $(1, 3, 0)$ and $\mathcal{V}$ and $\mathcal{W}$ finite rank $\mathbb{K}$-vector bundles with base space $\mathfrak{M}$ and fiberwise metrics $\boldsymbol{g}_{\mathcal{V}}$ and $\boldsymbol{g}_{\mathcal{W}}$, where $\mathbb{K} = \mathbb{R} \,\, \textnormal{or} \,\, \mathbb{C}$. Moreover, we let $\boldsymbol{\nabla}$ be the Levi--Civita connection on $T\mathfrak{M}$ and $\boldsymbol{\nabla}_{\mathcal{V}}$ and $\boldsymbol{\nabla}_{\mathcal{W}}$ the connections on $\mathcal{V}$ and $\mathcal{W}$, respectively. We assume that $(\mathfrak{M}, \boldsymbol{g})$ is stably causal, i.e., it admits a foliation by a family of spacelike hypersurfaces $(\mathfrak{N}_t)_{t \in \mathbb{R}}$, with $\mathfrak{N}_t := \{t\} \times \mathfrak{N}$, where the label $t$ is a global temporal function on $\mathfrak{M}$ \cite{MinguzziSanchez}. This assumption implies that $\mathfrak{M}$ has the product structure $\mathfrak{M} = \mathbb{R} \times \mathfrak{N}$. Consequently, if $\mathfrak{N}$ is complete, $(\mathfrak{M}, \boldsymbol{g})$ is globally hyperbolic. 

Next, we give a short account of linear, matrix-valued differential operators, their principal symbols, and their classification. For more details see, e.g., \cite{Taylor1, Evans, Hoermander} and references therein. 
\begin{Def}[Differential operator]
A linear, matrix-valued differential operator of order (at most) $m$ on $\mathfrak{M}$ is a mapping $P \hspace{-0.07cm} : \Gamma^{\infty}(\mathfrak{M}, \mathcal{V}) \rightarrow \Gamma^{\infty}(\mathfrak{M}, \mathcal{W})$ that in any local coordinate system $\boldsymbol{x}$ on an open subset $\mathcal{U} \subseteq \mathfrak{M}$ and any local trivializations $\mathcal{V}_{| \mathcal{U}} \rightarrow \mathcal{U} \times \mathbb{K}^p$ and $\mathcal{W}_{| \mathcal{U}} \rightarrow \mathcal{U} \times \mathbb{K}^q$, with $(p, q) \in \mathbb{N}^2$, is of the form
\begin{equation*}
P = \sum_{|\alpha| \leq m} \boldsymbol{A}_{\alpha}(\boldsymbol{x}) \partial_{\boldsymbol{x}}^{\alpha} \, ,
\end{equation*}
where $\Gamma^{\infty}(\mathfrak{M}, . \, )$ is the space of smooth sections over $\mathfrak{M}$, $\mathbb{N}^4_{0} \ni \alpha = (\alpha_0, ... \, , \alpha_3)$ is a multi-index of length $|\alpha| = \sum_{i = 0}^3 \alpha_i$, $\partial_{\boldsymbol{x}}^{\alpha} = \partial_{x_0}^{\alpha_0} ... \, \partial_{x_3}^{\alpha_3}$, and $\boldsymbol{A}_{\alpha} \in C^{\infty}\bigl(\mathfrak{M}, \textnormal{Mat}(q \times p, \mathbb{K})\bigr)$. 
\end{Def}
\noindent The vector space that contains these operators is usually denoted by $\textnormal{Diff}_m(\mathcal{V}, \mathcal{W})$.
\begin{Def}[Principal symbol] \label{PSymbol}
The principal symbol of a linear, matrix-valued differential operator $P \in \textnormal{Diff}_m(\mathcal{V}, \mathcal{W})$ is a bundle homomorphism $\sigma_P \hspace{-0.07cm} : \mathcal{V}_{\boldsymbol{x}} \rightarrow \mathcal{W}_{\boldsymbol{x}}$ that in any local coordinate system $\boldsymbol{x}$ on an open subset $\mathcal{U} \subseteq \mathfrak{M}$ and for any $\boldsymbol{\xi} \in T^{\star}_{\boldsymbol{x}}\mathfrak{M}$ reads 
\begin{equation} \label{PrincipalSymbol}
\sigma_P(\boldsymbol{x}, \boldsymbol{\xi}) = \sum_{|\alpha| = m} \boldsymbol{A}_{\alpha}(\boldsymbol{x}) \, \xi^{\alpha} \, .
\end{equation}
\end{Def}
\noindent This symbol captures the properties of a linear, matrix-valued differential operator that are based on its highest-order terms, thus characterizing the operator's qualitative behavior. Accordingly, one can use the principal symbol to classify linear, matrix-valued differential operators.
\begin{Def}[Ellipticity] \label{ellipticity}
A linear, matrix-valued differential operator $P \in \textnormal{Diff}_m(\mathcal{V}, \mathcal{W})$ with $p = q$ is called Petrovsky elliptic at $\boldsymbol{x} \in \mathfrak{M}$ if its principal symbol satisfies 
\begin{equation} \label{elliptic}
\textnormal{det}\bigl(\sigma_P(\boldsymbol{x}, \boldsymbol{\xi})\bigr) \not= 0
\end{equation}
for all $\boldsymbol{\xi} \in T^{\star}_{\boldsymbol{x}}\mathfrak{M} \backslash \{\boldsymbol{0}\}$. The operator is called Petrovsky elliptic in an open subset $\mathcal{U} \subseteq \mathfrak{M}$ if it is Petrovsky elliptic at every $\boldsymbol{x} \in \mathcal{U}$ and uniformly Petrovsky elliptic in $\mathcal{U}$ if there exists a constant $C > 0$ with
\begin{equation} \label{unielliptic}
\big|\textnormal{det}\bigl(\sigma_P(\boldsymbol{x}, \boldsymbol{\xi})\bigr)\big| \geq C |\boldsymbol{\xi}|^{m \cdot p}
\end{equation}
for all $\boldsymbol{x} \in \mathcal{U}$ and $\boldsymbol{\xi} \in T^{\star}_{\boldsymbol{x}}\mathfrak{M} \backslash \{\boldsymbol{0}\}$. Furthermore, writing the operator in the form $P = (P_{j k})_{j, k = 1}^p$, where $\textnormal{deg}(P_{j k}) \leq s_j + t_k$, with weights $s_j, t_k \in \mathbb{Z}$, and $P_{j k} \equiv 0$ for $s_j + t_k < 0 $, it is called Agmon--Douglis--Nirenberg elliptic or uniformly Agmon--Douglis--Nirenberg elliptic in $\mathcal{U}$ if the matrix $\bigl(\sigma_{P_{j k}}(\boldsymbol{x}, \boldsymbol{\xi})\bigr)_{j, k = 1}^p$ consisting of the principal symbols 
\begin{equation*}
\sigma_{P_{j k}}(\boldsymbol{x}, \boldsymbol{\xi}) = \sum_{|\alpha| = s_j + t_k} \bigl(\boldsymbol{A}_{\alpha}(\boldsymbol{x})\bigr)_{j k} \, \xi^{\alpha}
\end{equation*}
satisfies Inequality \textnormal{(\ref{elliptic})} or Inequality \textnormal{(\ref{unielliptic})} for every $\boldsymbol{x} \in \mathcal{U}$ and $\boldsymbol{\xi} \in T^{\star}_{\boldsymbol{x}}\mathfrak{M} \backslash \{\boldsymbol{0}\}$, respectively.
\end{Def}
\begin{Def}[Hyperbolicity] \label{hyperbolicity}
A linear, matrix-valued differential operator $P \in \textnormal{Diff}_m(\mathcal{V}, \mathcal{W})$ with $p = q$ is called Petrovsky hyperbolic at $\boldsymbol{x} \in \mathfrak{M}$ with respect to $\boldsymbol{\eta} \in T^{\star}_{\boldsymbol{x}}\mathfrak{M}\backslash\{\boldsymbol{0}\}$ if its principal symbol satisfies the conditions: 
\begin{itemize}
\item[1.] $\textnormal{det}\bigl(\sigma_P(\boldsymbol{x}, \boldsymbol{\eta})\bigr) \not = 0$.
\item[2.] The solutions $\lambda_n = \lambda_n(\boldsymbol{x}, \boldsymbol{\xi}, \boldsymbol{\eta})$, $n \in \{1, ... \, , m\}$, of the $m$th degree polynomial equation in $\lambda$ 
\begin{equation*}
\textnormal{det}\bigl(\sigma_P(\boldsymbol{x}, \boldsymbol{\xi} + \lambda \boldsymbol{\eta})\bigr) = 0	
\end{equation*}
are real-valued for every $\boldsymbol{\xi} \in T^{\star}_{\boldsymbol{x}}\mathfrak{M}\backslash\{\boldsymbol{0}\}$.
\end{itemize}
The operator is called strictly Petrovsky hyperbolic at $\boldsymbol{x} \in \mathfrak{M}$ with respect to $\boldsymbol{\eta} \in T^{\star}_{\boldsymbol{x}}\mathfrak{M}\backslash\{\boldsymbol{0}\}$ if the solutions $\lambda_n$ in Condition 2.\ are in addition pairwise distinct and (strictly) Petrovsky hyperbolic in an open subset $\mathcal{U} \subseteq \mathfrak{M}$ with respect to $\boldsymbol{\eta} \in T^{\star}_{\boldsymbol{x}}\mathfrak{M}\backslash\{\boldsymbol{0}\}$ if it is (strictly) Petrovsky hyperbolic at every $\boldsymbol{x} \in \mathcal{U}$. Moreover, it is called Agmon--Douglis--Nirenberg hyperbolic in $\mathcal{U}$ with respect to $\boldsymbol{\eta} \in T^{\star}_{\boldsymbol{x}}\mathfrak{M}\backslash\{\boldsymbol{0}\}$ if the matrix of principal symbols $\bigl(\sigma_{P_{j k}}(\boldsymbol{x}, \boldsymbol{\xi})\bigr)_{j, k = 1}^p$ introduced in Definition \ref{ellipticity} satisfies Conditions 1.\ and 2.\ for all $\boldsymbol{x} \in \mathcal{U}$ and strictly Agmon--Douglis--Nirenberg hyperbolic in $\mathcal{U}$ with respect to $\boldsymbol{\eta} \in T^{\star}_{\boldsymbol{x}}\mathfrak{M}\backslash\{\boldsymbol{0}\}$ if the solutions $\lambda_n$ in Condition 2.\ are furthermore pairwise distinct for every $\boldsymbol{x} \in \mathcal{U}$. 
\end{Def}
\noindent We point out that for general linear, matrix-valued differential operators, in particular for the Hamiltonians specified below, a proper definition of the third class, parabolicity, does not exist. However, for a certain subset of linear, matrix-valued differential operators, a suitable notion of parabolicity is presented in \cite[Chapters 15.7 and 15.8]{Taylor3}.

Lastly, we define the Dirac, Klein--Gordon, Maxwell, and Fierz--Pauli Hamiltonians, which are linear, matrix-valued differential operators of mixed elliptic-hyperbolic type, on an open subset $\mathcal{S} \subseteq \mathfrak{N}$. (Detailed information about the mathematical frameworks pertaining to these operators can be found in, e.g., the standard textbooks \cite{LawsonMichelsohn, PenroseRindler, Taylor2}.) For the purpose of defining the Dirac Hamiltonian, which determines the time evolutions of the elementary spin-$\tfrac{1}{2}$ fermions, we fix an arbitrary spin structure and let $S\mathfrak{M}$ be the corresponding spinor bundle, i.e., a complex vector bundle with fibers $S_{\boldsymbol{x}}\mathfrak{M} \simeq \mathbb{C}^4$ for all $\boldsymbol{x} \in \mathfrak{M}$. Moreover, we endow each fiber with an indefinite inner product of signature $(2, 2, 0)$  
\begin{equation} \label{SIP}
\langle \, . \, | \, . \, \rangle_{\boldsymbol{x}} \hspace{-0.07cm} : S_{\boldsymbol{x}} \mathfrak{M} \times S_{\boldsymbol{x}} \mathfrak{M} \rightarrow \mathbb{C} \, , \quad (\boldsymbol{\psi}, \boldsymbol{\chi}) \mapsto \boldsymbol{\psi}^{\dagger} \mathscr{S} \hspace{0.02cm} \boldsymbol{\chi} \, ,
\end{equation} 
and a Clifford multiplication $\boldsymbol{\gamma} \hspace{-0.07cm} : T_{\boldsymbol{x}}\mathfrak{M} \rightarrow \textnormal{End}(S_{\boldsymbol{x}}\mathfrak{M})$ that is fiber-preserving and satisfies the anti\-com\-mutation relations
\begin{equation} \label{ACR}
\bigl\{\boldsymbol{\gamma}(\boldsymbol{v}), \boldsymbol{\gamma}(\boldsymbol{w})\bigr\} = 2 \hspace{0.05cm} \boldsymbol{g}(\boldsymbol{v}, \boldsymbol{w}) \, \1_{S_{\boldsymbol{x}}\mathfrak{M}}
\end{equation}
for all $\boldsymbol{x} \in \mathfrak{M}$ and $\boldsymbol{v}, \boldsymbol{w} \in T_{\boldsymbol{x}}\mathfrak{M}$. The matrix $\mathscr{S}$ in the indefinite inner product (\ref{SIP}) is chosen such that $\boldsymbol{\gamma}^{\dagger} \circ \mathscr{S} = \mathscr{S} \circ \boldsymbol{\gamma}$. Then, the Dirac operator $\mathcal{D} \hspace{-0.07cm} : \Gamma^{\infty}(\mathfrak{M}, S\mathfrak{M}) \rightarrow \Gamma^{\infty}(\mathfrak{M}, S\mathfrak{M})$ with an external potential $\mathcal{B} \in \textnormal{End}(S\mathfrak{M})$ in any local coordinate system $\boldsymbol{x}$ on an open subset $\mathcal{U} \subseteq \mathfrak{M}$ and in any local trivialization $S\mathfrak{M}_{| \mathcal{U}} \rightarrow \mathcal{U} \times \mathbb{C}^4$ takes the form
\begin{equation*} 
\textnormal{Diff}_1(S\mathfrak{M}, S\mathfrak{M}) \ni \mathcal{D} = i \gamma^{\mu} \nabla_{\mu} + \mathcal{B}(\boldsymbol{x}) \, , \quad \mu \in \{0, 1, 2, 3\} \, ,
\end{equation*}
where $\nabla_{\mu}$ are the components of the covariant derivative corresponding to the unique metric connection on $S\mathfrak{M}$ obtained as a lift of the Levi--Civita connection $\boldsymbol{\nabla}$ on $T\mathfrak{M}$. Using this operator, we can define the Dirac equation as
\begin{equation*} 
(\mathcal{D} - m) \boldsymbol{\psi}(\boldsymbol{x}) = \boldsymbol{0} \, ,
\end{equation*}
in which the Dirac wave function $\boldsymbol{\psi} \in \Gamma^{\infty}_{\textnormal{sc}}(\mathfrak{M}, S\mathfrak{M})$, the subscript ``sc'' denoting spatially compact support, and $m \in \mathbb{R}_{\geq 0}$ is the rest mass of the spin-$\tfrac{1}{2}$ fermion associated with the Dirac wave function. Choosing local coordinates $\boldsymbol{x} = (t, \boldsymbol{y})$ on $\mathbb{R} \times \mathcal{S} \subseteq \mathfrak{M}$ and multiplying this equation by the inverse of $\gamma^t$, separation of the partial time derivative term yields the Dirac equation in Hamiltonian form 
\begin{equation} \label{DEH}
i \partial_t \boldsymbol{\psi}(t, \boldsymbol{y}) = \biggl[(\gamma^t)^{- 1} \bigl(- i \gamma^j \nabla_j - \mathcal{B} + m\bigr) + \frac{\omega_{t \mu \nu}}{8 i} \bigl[\gamma^{\mu}, \gamma^{\nu}\bigr]\biggr] \boldsymbol{\psi}(t, \boldsymbol{y}) =: \mathcal{H}_{\mathcal{D}} \boldsymbol{\psi}(t, \boldsymbol{y}) \, , \quad j \in \{1, 2, 3\} \, ,
\end{equation}
where $\mathcal{H}_{\mathcal{D}} \hspace{-0.07cm} : \Gamma^{\infty}(\mathfrak{N}, S\mathfrak{M}) \rightarrow \Gamma^{\infty}(\mathfrak{N}, S\mathfrak{M})$ is the Dirac Hamiltonian and $\omega_{t \mu \nu}$ are the temporal components of the connection $1$-forms $\omega_{\mu \nu} = \boldsymbol{g}(\boldsymbol{\nabla} \boldsymbol{e}_{\mu}, \boldsymbol{e}_{\nu})$ of the Levi--Civita connection with respect to a local orthonormal frame $(\boldsymbol{e}_{\mu})_{\mu \in \{0, 1, 2, 3\}}$. For time-independent spacetimes, the Dirac Hamiltonian with domain of definition $\textnormal{Dom}(\mathcal{H}_{\mathcal{D}}) = \Gamma^{\infty}_{\textnormal{c}}(\mathfrak{N} \backslash \partial \mathfrak{N}, S\mathfrak{M})$, the subscript ``c'' signifying compact support, is formally self-adjoint with respect to the scalar product
\begin{equation*}
(\boldsymbol{\psi} \hspace{0.02cm} | \hspace{0.02cm} \boldsymbol{\chi})_{\mathfrak{N}} = \int_{\mathfrak{N}} \langle \boldsymbol{\psi} \hspace{0.02cm} | \hspace{0.02cm} \gamma^{\mu} \nu_{\mu} \hspace{0.02cm} \boldsymbol{\chi}\rangle_{\boldsymbol{x}} \, \textnormal{d}\mu_{\mathfrak{N}}(\boldsymbol{x}) \, ,
\end{equation*}
in which $\boldsymbol{\nu}$ is a future-directed, timelike normal on $\mathfrak{N}$ and $\textnormal{d}\mu_{\mathfrak{N}}$ the invariant measure on $(\mathfrak{N}, \boldsymbol{g}_{\mathfrak{N}})$, $\boldsymbol{g}_{\mathfrak{N}}$ being the induced metric on $\mathfrak{N}$. We note that in this particular framework for the Dirac equation, a Dirac wave function $\boldsymbol{\psi}(t, . \, )$ is regarded, for every value of the time coordinate $t$, as being contained in the completion of the Hilbert space $\bigl(\Gamma^{\infty}_{\textnormal{c}}(\mathfrak{N} \backslash \partial \mathfrak{N}, S\mathfrak{M}), ( \, . \, | \, . \, )_{\mathfrak{N}}\bigr)$. 

In order to define the Klein--Gordon, Maxwell, and Fierz--Pauli Hamiltonians, which determine the time evolutions of the elementary spin-$0$, spin-$1$, and spin-$2$ bosons, respectively, we employ the d'Alembert operator $\square \hspace{-0.07cm} : \Gamma^{\infty}(\mathfrak{M}, \mathfrak{M} \times \mathbb{K}) \rightarrow \Gamma^{\infty}(\mathfrak{M}, \mathfrak{M} \times \mathbb{K})$ with an external potential $\mathfrak{B} \in \textnormal{End}(\mathfrak{M} \times \mathbb{K})$ that in any local coordinate system $\boldsymbol{x}$ on $\mathcal{U}$ and in any local trivialization $\mathfrak{M} \times \mathbb{K}_{| \mathcal{U}} \rightarrow \mathcal{U} \times \mathbb{K}$ reads 
\begin{equation*} 
\textnormal{Diff}_2(\mathfrak{M} \times \mathbb{K}, \mathfrak{M} \times \mathbb{K}) \ni \square = \frac{1}{\sqrt{|\textnormal{det}(\boldsymbol{g})|}} \, \partial_{\mu} \bigl(\sqrt{|\textnormal{det}(\boldsymbol{g})|} \,\, g^{\mu \nu} \partial_{\nu}\bigr) + \mathfrak{B}(\boldsymbol{x}) \, .
\end{equation*}
With this operator, we can, successively, define the Klein--Gordon equation as
\begin{equation} \label{KGE} 
(\square - \mathfrak{m}^2) \phi(\boldsymbol{x}) = 0 \, ,
\end{equation}
where the scalar wave function $\phi \in \Gamma^{\infty}_{\textnormal{sc}}(\mathfrak{M}, \mathfrak{M} \times \mathbb{C})$ and $\mathfrak{m} \in \mathbb{R}_{\geq 0}$ is the rest mass of the spin-$0$ boson associated with the scalar wave function, the (homogeneous) electromagnetic wave equation as
\begin{equation} \label{EWE}
\square A_{\mu}(\boldsymbol{x}) = R_{\mu \nu} A^{\nu}(\boldsymbol{x}) \, ,
\end{equation}
in which $A_{\mu} \in \Gamma^{\infty}_{\textnormal{sc}}(\mathfrak{M}, \mathfrak{M} \times \mathbb{R})$ are the components of the electromagnetic $4$-potential subject to the Lorenz gauge condition $\nabla^{\mu} A_{\mu} = 0$ and $R_{\mu \nu}$ are the usual Ricci curvature tensor components,\footnote{Except for an additional zero-order term, the same electromagnetic wave equation holds for the inhomogeneous case.} and the Fierz--Pauli equation\footnote{This equation describes the propagation of (linear) gravitons on curved background spacetimes \cite{AragoneDeser, BuchbinderEtAl}.} as
\begin{equation} \label{FPE}
\square H_{\mu \nu}(\boldsymbol{x}) = - 2 R_{\alpha \mu \beta \nu} H^{\alpha \beta}(\boldsymbol{x}) \, ,
\end{equation}
with the graviton field components $H_{\mu \nu} \in \Gamma^{\infty}_{\textnormal{sc}}(\mathfrak{M}, \mathfrak{M} \times \mathbb{R})$ satisfying the Lorenz gauge condition $\nabla^{\mu} H_{\mu \nu} = 0$ and the trace condition $H\indices{^{\mu}_{\mu}} = 0$, and where $R_{\alpha \mu \beta \nu}$ are the components of the Riemann curvature tensor. In local coordinates $\boldsymbol{x} = (t, \boldsymbol{y})$ on $\mathbb{R} \times \mathcal{S} \subseteq \mathfrak{M}$, the Hamiltonian form of the Klein--Gordon equation (\ref{KGE}) is given by the system
\begin{equation} \label{KGEH}
i \partial_t \Phi(t, \boldsymbol{y}) = \frac{1}{g^{t t}} \hspace{-0.05cm}
\begin{pmatrix} 
0 & g^{t t} \\ 
\varepsilon & - i \kappa 
\end{pmatrix} \hspace{-0.07cm} \Phi(t, \boldsymbol{y}) =: \mathcal{H}_{\textnormal{KG}} \Phi(t, \boldsymbol{y}) \, ,
\end{equation}
with the Klein--Gordon Hamiltonian $\mathcal{H}_{\textnormal{KG}} \hspace{-0.07cm} : \Gamma^{\infty}(\mathfrak{N}, \mathfrak{M} \times \mathbb{C}) \rightarrow \Gamma^{\infty}(\mathfrak{N}, \mathfrak{M} \times \mathbb{C})$, the wave function $\Phi(t, \boldsymbol{y}) := (1, i \partial_t)^{\textnormal{T}} \hspace{0.02cm} \phi(t, \boldsymbol{y})$, and 
\begin{equation*}
\begin{split}
\varepsilon & := g^{i j} \partial_i \partial_j + \frac{1}{\sqrt{|\textnormal{det}(\boldsymbol{g})|}} \, \partial_{\mu} \bigl(\sqrt{|\textnormal{det}(\boldsymbol{g})|} \, g^{\mu j}\bigr) \partial_j + \mathfrak{B} - \mathfrak{m}^2 \\[0.25cm]
\kappa & := 2 g^{t j} \partial_j + \frac{1}{\sqrt{|\textnormal{det}(\boldsymbol{g})|}} \, \partial_{\mu} \bigl(\sqrt{|\textnormal{det}(\boldsymbol{g})|} \, g^{\mu t}\bigr) \, .
\end{split}
\end{equation*}
Taking the domain of definition of the Klein--Gordon Hamiltonian on time-independent spacetimes to be $\textnormal{Dom}(\mathcal{H}_{\textnormal{KG}}) = \Gamma^{\infty}_{\textnormal{c}}(\mathfrak{N} \backslash \partial \mathfrak{N}, \mathfrak{M} \times \mathbb{C}^2)$, this operator is formally self-adjoint with respect to the inner product
\begin{equation*}
\Sl\Phi \hspace{0.02cm} | \hspace{0.02cm} \Psi\Sr_{\mathfrak{N}} = \int_{\mathfrak{N}} \Phi^{\dagger} \hspace{-0.07cm}
\begin{pmatrix} 
\varepsilon/g^{t t} & 0 \\ 
0 & 1
\end{pmatrix} 
\hspace{-0.05cm} \Psi \, \textnormal{d}\mu_{\mathfrak{N}}(\boldsymbol{x}) \, .
\end{equation*}
Similar Hamiltonian forms and inner products result for the (homogeneous) electromagnetic wave equation (\ref{EWE}) and the Fierz--Pauli equation (\ref{FPE}). We point out that the above Hamiltonians comprise, inter alia, the spatial parts of either the Dirac or the d'Alembert operator on $\mathfrak{M}$ (times the inverse of $\gamma^t$ or the reciprocal of $g^{t t}$, respectively), obtained by separating the partial time derivative term, and not the intrinsic Dirac or d'Alembert operators on $\mathfrak{N}$, which, unlike the former, are elliptic everywhere on $\mathfrak{N}$. This fact is of paramount importance in the analysis of the determinantial polynomials of the principal symbols of the Hamiltonians in the next section.

\section{Quasi-local, Functional Analytic Detection Method for Stationary Limit Surfaces} \label{SectionIII}

\subsection{Derivation} \label{SectionIIIA}

\noindent To locate the stationary limit surfaces of black hole spacetimes with stationary regions through ellipticity-hyperbolicity transitions of the Hamiltonians of the elementary fermions and bosons, the behaviors of the determinantial polynomials of the principal symbols of these operators have to be analyzed. Thus, we begin by computing the principal symbol [see Equation (\ref{PrincipalSymbol}) in Definition \ref{PSymbol}] of the Dirac Hamiltonian $\mathcal{H}_{\mathcal{D}}$ defined in Equation~(\ref{DEH}) at a point $\boldsymbol{y} \in \mathfrak{N}$ of a spacelike hypersurface of a black hole spacetime with stationary regions, which yields 
\begin{equation*}
\sigma_{\mathcal{H}_{\mathcal{D}}}(\boldsymbol{y}, \boldsymbol{\xi}) = - i \hspace{0.04cm} (\gamma^t)^{- 1} \gamma^j \xi_j \, ,
\end{equation*}
with $\boldsymbol{\xi} \in T^{\star}_{\boldsymbol{y}} \mathfrak{N} \backslash \{\boldsymbol{0}\}$. Making use of the relations
\begin{equation*}
(\gamma^t)^{- 1} \, (\gamma^t)^{- 1} = \frac{\1_{S_{\boldsymbol{x}}\mathfrak{M}}}{g^{t t}} \quad \textnormal{and} \quad \gamma^j \xi_j \, \gamma^k \xi_k = g^{j k} \xi_j \xi_k \, \1_{S_{\boldsymbol{x}}\mathfrak{M}} \, ,
\end{equation*}
which follow directly from Equation (\ref{ACR}), the determinant of this principal symbol is given by 
\begin{equation} \label{DetPolDir}
\textnormal{det}\bigl(\sigma_{\mathcal{H}_{\mathcal{D}}}(\boldsymbol{y}, \boldsymbol{\xi})\bigr) = (- i)^4 \, \textnormal{det}\bigl((\gamma^t)^{- 1}\bigr) \, \textnormal{det}(\gamma^j \xi_j) = \biggl[\frac{g^{j k} \xi_j \xi_k}{g^{t t}}\biggr]^2 \, . 
\end{equation}
The principal symbol of the Klein--Gordon Hamiltonian $\mathcal{H}_{\textnormal{KG}}$ specified in Equation (\ref{KGEH}), on the other hand, is singular and nondiagonizable, this being an artifact of the reduction of order in going from the second-order equation (\ref{KGE}) to the first-order system (\ref{KGEH}). Therefore, we evaluate the matrix of principal symbols of the components of the Klein--Gordon Hamiltonian as in the notions of Agmon--Douglis--Nirenberg ellipticity and hyperbolicity (cf.\ Definitions \ref{ellipticity} and \ref{hyperbolicity}), resulting in
\begin{equation*} 
\bigl(\sigma_{(\mathcal{H}_{\textnormal{KG}})_{j k}}(\boldsymbol{y}, \boldsymbol{\xi})\bigr)_{j, k = 1}^2 = \frac{1}{g^{t t}} \hspace{-0.05cm}
\begin{pmatrix} 
0 & g^{t t} \\ 
g^{j k} \xi_j \xi_k & - 2 i \hspace{0.03cm} g^{t k} \xi_k 
\end{pmatrix} ,
\end{equation*}
where the corresponding determinant reads
\begin{equation} \label{DetPolDalem}
\textnormal{det}\Bigl(\bigl(\sigma_{(\mathcal{H}_{\textnormal{KG}})_{j k}}(\boldsymbol{y}, \boldsymbol{\xi})\bigr)_{j, k = 1}^2\Bigr) = - \frac{g^{j k} \xi_j \xi_k}{g^{t t}} \, .
\end{equation}
These computations show that both the Dirac Hamiltonian and the Klein--Gordon Hamiltonian (and hence the Maxwell and Fierz--Pauli Hamiltonians) are elliptic at $\boldsymbol{y} \in \mathfrak{N}$ if 
\begin{equation} \label{EC}
g^{j k} \xi_j \xi_k \not= 0 
\end{equation}
for all $\boldsymbol{\xi} \in T^{\star}_{\boldsymbol{y}} \mathfrak{N} \backslash \{\boldsymbol{0}\}$ [see Inequality (\ref{elliptic}) in Definition \ref{ellipticity}]. However, if this is not the case and if for some $\boldsymbol{\eta} \in T^{\star}_{\boldsymbol{y}} \mathfrak{N} \backslash \{\boldsymbol{0}\}$ satisfying Inequality (\ref{EC}) the two zeros of the second-order polynomial in $\lambda$
\begin{equation*} 
g^{j k} \bigl[\lambda^2 \eta_j \eta_k + 2 \lambda \, \eta_j \xi_k + \xi_j \xi_k\bigr]
\end{equation*}
are real-valued for every $\boldsymbol{\xi} \in T^{\star}_{\boldsymbol{y}} \mathfrak{N} \backslash \{\boldsymbol{0}\}$, which is determined by the condition 
\begin{equation} \label{RZ}
\bigl(g^{j k} \eta_j \xi_k\bigr)^2 \geq \bigl(g^{j k} \eta_j \eta_k\bigr) \bigl(g^{l m} \xi_l \xi_m\bigr) \, ,
\end{equation}
the Hamiltonians are hyperbolic at $\boldsymbol{y} \in \mathfrak{N}$ with respect to said $\boldsymbol{\eta}$ instead (see Condition 2.\ in Definition \ref{hyperbolicity}). At an ellipticity-hyperbolicity transition at $\boldsymbol{y} \in \mathfrak{N}$, where the Hamiltonians are neither elliptic nor hyperbolic with respect to any $\boldsymbol{\eta} \in T^{\star}_{\boldsymbol{y}} \mathfrak{N} \backslash \{\boldsymbol{0}\}$, there exists at least one $\boldsymbol{\xi} \in T^{\star}_{\boldsymbol{y}} \mathfrak{N} \backslash \{\boldsymbol{0}\}$ such that 
\begin{equation} \label{SLSCN}
g^{j k} \xi_j \xi_k = 0 \, ,
\end{equation}
and for those $\boldsymbol{\eta} \in T^{\star}_{\boldsymbol{y}} \mathfrak{N} \backslash \{\boldsymbol{0}\}$ for which Inequality (\ref{EC}) holds, Inequality (\ref{RZ}) is not satisfied for all $\boldsymbol{\xi} \in T^{\star}_{\boldsymbol{y}} \mathfrak{N} \backslash \{\boldsymbol{0}\}$. 

Consequently, as the Hamiltonians are elliptic in stationary spacetime regions and hyperbolic in nonstationary spacetime regions, the simultaneous violations of Inequalities (\ref{EC}) and (\ref{RZ}), which, taken together, are sufficient conditions for the localization of ellipticity-hyperbolicity transitions, can be used to detect the stationary limit surfaces of a given black hole spacetime with stationary regions.\footnote{Since the purely spatial part of the inverse spacetime metric $\boldsymbol{g}^{- 1}$ is in general indefinite, this can also be seen from the fact that Equation (\ref{SLSCN}) acts as a stationary limit condition on $\mathfrak{N}$.}\fnsep\footnote{Because of their generality, Inequalities (\ref{EC}) and (\ref{RZ}) are not limited to black hole spacetimes. Accordingly, they can be employed to locate the stationary limit surfaces of any spacetime having the product structure $\mathfrak{M} = \mathbb{R} \times \mathfrak{N}$.} This functional analytic detection method for stationary limit surfaces is quasi-local in the sense that it employs solely operators on $\mathfrak{N}$. Therefore, it locates stationary limit surfaces only on the spacelike hypersurfaces of black hole spacetimes where these operators are defined. Moreover, since the method is based on their principal symbols, it is independent of any particular choice of coordinates on $\mathfrak{N}$, or rather on the associated patch of $\mathfrak{M}$ as long as one uses coordinate systems for which the Hamiltonians are time-independent and coordinate transformations that are diffeomorphisms from $\mathfrak{N}$ to itself. Working with coordinate systems that give rise to time-dependent Hamiltonians, such as Kruskal--Szekeres-type coordinates or Lema\^itre-type coordinates,\footnote{For the original Kruskal--Szekeres and Lema\^itre coordinates see \cite{Kruskal, Szekeres} and \cite{Lemaitre}, respectively, and for certain generalizations thereof \cite[Chapter 6.58.]{ChandraBook} and \cite{Sorge}.} would, however, break the coordinate invariance and may result in different classifications of the Hamiltonians. Hence, being limited to coordinate systems on $\mathfrak{M}$ for which the purely spatial part $(g^{i j})_{i, j \in \{1, 2, 3\}}$ of the inverse spacetime metric $\boldsymbol{g}^{- 1}$ is time-independent,\footnote{This directly follows from Equations (\ref{DetPolDir}) and (\ref{DetPolDalem}) together with the time-independence of the Hamiltonians.} Inequalities (\ref{EC}) and (\ref{RZ}) can, instead of $(g^{i j})_{i, j \in \{1, 2, 3\}}$, be evaluated by the inverse induced metric $\boldsymbol{g}^{- 1}_{\mathfrak{N}} = \phi^{\star} \boldsymbol{g}^{- 1}$ on the spacelike hypersurface $\mathfrak{N}$, where $\phi \hspace{-0.07cm} : \mathfrak{N} \hookrightarrow \mathfrak{M}$ is the trivial isometric embedding and the asterisk denotes the pullback from $\mathfrak{M}$ to $\mathfrak{N}$. This aspect further illustrates the quasi-locality of the functional analytic detection method.

\subsection{Application} \label{SectionIIIB}

\noindent We now locate the stationary limit surfaces of the Kerr--Newman, Schwarzschild--de Sitter, and Taub--NUT black hole spacetimes using our functional analytic detection method. More precisely, for each case study, we determine and examine the violations of Inequalities (\ref{EC}) and (\ref{RZ}), thus identifying the locations where the Hamiltonians undergo ellipticity-hyperbolicity transitions.
\begin{Example}[Kerr--Newman black hole spacetime] \label{Ex1}
\noindent To properly represent the axially symmetric Kerr--Newman black hole spacetime \cite{KerrNewman}, which can be employed to describe the gravitational field of an isolated, rotating, electrically charged black hole, we work with an advanced Eddington--Finkelstein coordinate system $(\tau, r, \vartheta, \varphi) \in \mathbb{R} \times \mathbb{R}_{> 0} \times (0, \pi) \times [0, 2 \pi)$ (this coordinate system is a direct generalization of the advanced Eddington--Finkelstein coordinate system used in \cite{Roeken} for the Kerr black hole spacetime to the Kerr--Newman case). Expressed in these coordinates, the nonextreme Kerr--Newman metric is time-independent and regular everywhere, reading 
\begin{equation} \label{KNmetric}
\begin{split}	
\boldsymbol{g} & = \biggl[1 - \frac{2 M r - Q^2}{\Sigma}\biggr] \textnormal{d}\tau \otimes \textnormal{d}\tau - \frac{2 M r - Q^2}{\Sigma} \, \bigl(\textnormal{d}\tau \otimes \bigl[\textnormal{d}r - a \sin^2(\vartheta) \, \textnormal{d}\varphi\bigr] + \bigl[\textnormal{d}r - a \sin^2(\vartheta) \, \textnormal{d}\varphi\bigr] \otimes \textnormal{d}\tau\bigr) \\[0.25cm]
& \hspace{0.45cm} - \biggl[1 + \frac{2 M r - Q^2}{\Sigma}\biggr] \big(\textnormal{d}r - a \sin^2(\vartheta) \, \textnormal{d}\varphi\bigr) \otimes \big(\textnormal{d}r - a \sin^2(\vartheta) \, \textnormal{d}\varphi\bigr) - \Sigma \, \boldsymbol{g}_{S^2} \, ,
\end{split}
\end{equation}
where $M \in \mathbb{R}_{> 0}$ is the (ADM) mass, $a \in \mathbb{R}$ is the angular momentum per unit mass, and $Q \in \mathbb{R}$ is the charge of the Kerr--Newman black hole satisfying $||(a, Q)||_{\mathbb{R}^2} < M$. Moreover, $\Sigma = \Sigma(r, \vartheta) := r^2 + a^2 \cos^2(\vartheta)$ and $\boldsymbol{g}_{S^2} = \textnormal{d}\vartheta \otimes \textnormal{d}\vartheta + \sin^2(\vartheta) \, \textnormal{d}\varphi \otimes \textnormal{d}\varphi$ is the metric on the unit $2$-sphere. Substituting the corresponding inverse induced metric on $\mathfrak{N}$ given by
\begin{equation*}		
\boldsymbol{g}^{- 1}_{\mathfrak{N}} = - \frac{1}{\Sigma} \bigl([r^2 - 2 M r + a^2 + Q^2] \, \partial_r \otimes \partial_r + a \, (\partial_r \otimes \partial_{\varphi} + \partial_{\varphi} \otimes \partial_r) + \boldsymbol{g}^{- 1}_{S^2}\bigr)
\end{equation*}
into Inequalities (\ref{EC}) and (\ref{RZ}), we obtain, after simple algebraic manipulations, 
\begin{equation} \label{KNI1} 
[r^2 - 2 M r + a^2 \cos^2(\vartheta)+ Q^2] \, \xi^2_r + [a \sin(\vartheta) \, \xi_r + \textnormal{csc}(\vartheta) \, \xi_{\varphi}]^2 + \xi^2_{\vartheta} \not= 0 
\end{equation}
and
\begin{equation} \label{KNI2} 
\begin{split}
& - \bigl[r^2 - 2 M r + a^2 \cos^2(\vartheta) + Q^2\bigr] \bigl[\sin^2(\vartheta) \, (\eta_r \xi_{\vartheta} - \eta_{\vartheta} \xi_r)^2 + (\eta_r \xi_{\varphi} - \eta_{\varphi} \xi_r)^2\bigr] \\[0.25cm] 
& \geq \bigl[\eta_{\vartheta} \xi_{\varphi} - \eta_{\varphi} \xi_{\vartheta} - a \sin^2(\vartheta) \, (\eta_r \xi_{\vartheta} - \eta_{\vartheta} \xi_r)\bigr]^2 \, ,
\end{split}
\end{equation}
respectively. Considering all $\boldsymbol{\xi}, \boldsymbol{\eta} \in T^{\star}_{\boldsymbol{y}} \mathfrak{N} \backslash \{\boldsymbol{0}\}$, we immediately find that these inequalities are not ful\-filled---and therefore that the Hamiltonians undergo ellipticity-hyperbolicity transitions---at
\begin{equation*}
r_{1, 2} = M \pm \sqrt{M^2  - a^2 \cos^2(\vartheta) - Q^2} =: r^{\pm}_{\mathcal{E}} \, , 
\end{equation*}
which are the radial locations of the outer and inner ergosurfaces. Hence, the outer and inner ergosurfaces at $\mathbb{R} \times \{r^{\pm}_{\mathcal{E}}\} \times (0, \pi) \times [0, 2 \pi)$ are the stationary limit surfaces of the Kerr--Newman black hole spacetime. Furthermore, the Hamiltonians are elliptic outside the ergosphere in $(0, r^-_{\mathcal{E}}) \cup (r^+_{\mathcal{E}}, \infty) \times (0, \pi) \times [0, 2 \pi)$ and hyperbolic inside the ergosphere in $(r^-_{\mathcal{E}}, r^+_{\mathcal{E}}) \times (0, \pi) \times [0, 2 \pi)$ with respect to any $\boldsymbol{\eta} \in T^{\star}_{\boldsymbol{y}} \mathfrak{N} \backslash \{\boldsymbol{0}\}$ satisfying Inequality (\ref{KNI1}) and Inequality (\ref{KNI2}) for all $\boldsymbol{\xi} \in T^{\star}_{\boldsymbol{y}} \mathfrak{N} \backslash \{\boldsymbol{0}\}$. It is important to stress that the ellipticity-hyperbolicity transitions of the Hamiltonians at the outer and inner ergosurfaces are not artifacts of coordinate singularities, as the nonextreme Kerr--Newman metric in the advanced Eddington--Finkelstein coordinate representation (\ref{KNmetric}) is regular on the entire domain of the black hole spacetime.
\end{Example}	
\begin{Example}[Schwarzschild--de Sitter black hole spacetime] \label{SdS}
\noindent In the case of the Schwarzschild--de Sitter black hole spacetime \cite{Weyl}, which models a spherically symmetric black hole in the center of an inflating universe, we represent the Schwarzschild--de Sitter metric also in advanced Eddington--Finkelstein coordinates $(\tau, r, \vartheta, \varphi) \in \mathbb{R} \times \mathbb{R}_{> 0} \times (0, \pi) \times [0, 2 \pi)$, resulting in the time-independent and regular expression
\begin{equation*} 
\boldsymbol{g} = \biggl[1 - \frac{2 M}{r} - \frac{\Lambda r^2}{3}\biggr] \textnormal{d}\tau \otimes \textnormal{d}\tau - \biggl[\frac{2 M}{r} + \frac{\Lambda r^2}{3}\biggr] (\textnormal{d}\tau \otimes \textnormal{d}r + \textnormal{d}r \otimes \textnormal{d}\tau) - \biggl[1 + \frac{2 M}{r} + \frac{\Lambda r^2}{3}\biggr] \textnormal{d}r \otimes \textnormal{d}r - r^2 \, \boldsymbol{g}_{S^2} \, .
\end{equation*}
Here, $M \in \mathbb{R}_{> 0}$ is again the (ADM) mass of the black hole and $\Lambda \in \mathbb{R}_{> 0}$ is the cosmological constant. Using the corresponding inverse induced metric on $\mathfrak{N}$
\begin{equation*}		
\boldsymbol{g}^{- 1}_{\mathfrak{N}} = - \biggl[1 - \frac{2 M}{r} - \frac{\Lambda r^2}{3}\biggr] \, \partial_r \otimes \partial_r - \frac{1}{r^2} \, \boldsymbol{g}^{- 1}_{S^2} \, ,
\end{equation*}
Inequalities (\ref{EC}) and (\ref{RZ}) readily become
\begin{equation} \label{SdSI1} 
r \biggl[r - 2 M - \frac{\Lambda r^3}{3}\biggr] \xi^2_r + \xi^2_{\vartheta} + \textnormal{csc}^2(\vartheta) \, \xi^2_{\varphi} \not= 0 
\end{equation}
and 
\begin{equation} \label{SdSI2} 
- r \biggl[r - 2 M - \frac{\Lambda r^3}{3}\biggr] \bigl[\sin^2(\vartheta) \, (\eta_r \xi_{\vartheta} - \eta_{\vartheta} \xi_r)^2 + (\eta_r \xi_{\varphi} - \eta_{\varphi} \xi_r)^2\bigr] \geq (\eta_{\vartheta} \xi_{\varphi} - \eta_{\varphi} \xi_{\vartheta})^2 \, .
\end{equation}
Thus, the locations of the ellipticity-hyperbolicity transitions of the Hamiltonians, taking into account all $\boldsymbol{\xi}, \boldsymbol{\eta} \in T^{\star}_{\boldsymbol{y}} \mathfrak{N} \backslash \{\boldsymbol{0}\}$, are given by the zeros of the depressed cubic polynomial $r^3 - 3 r/\Lambda + 6 M/\Lambda$ where both of these inequalities do not hold. In more detail, for the physically relevant case $9 M^2 \Lambda \in (0, 1)$, there exist, on the one hand, two distinct positive, real-valued zeros at 
\begin{equation*}
r_1 = \frac{2}{\sqrt{\Lambda}} \cos\biggl(\frac{1}{3} \arccos\bigl(- 3 M \sqrt{\Lambda} \, \bigr) + \frac{4 \pi}{3}\biggr) =: r_{\textnormal{EH}}	
\end{equation*}
and 
\begin{equation*}
r_2 = \frac{2}{\sqrt{\Lambda}} \cos\biggl(\frac{1}{3} \arccos\bigl(- 3 M \sqrt{\Lambda} \, \bigr)\biggr) =: r_{\textnormal{CH}}
\end{equation*}
with $r_1 < r_2$, where the smaller zero corresponds to the Schwarzschild-type black hole event horizon and the larger zero to the de Sitter-type cosmological event horizon, and, on the other hand, a third negative, real-valued zero 
\begin{equation*}
r_3 = \frac{2}{\sqrt{\Lambda}} \cos\biggl(\frac{1}{3} \arccos\bigl(- 3 M \sqrt{\Lambda} \, \bigr) + \frac{2 \pi}{3}\biggr) \, ,
\end{equation*}
which, however, lies outside the range of the radial coordinate. Accordingly, the two event horizons at $\mathbb{R} \times \{r_{\textnormal{EH, CH}}\} \times (0, \pi) \times [0, 2 \pi)$ are the stationary limit surfaces of the Schwarzschild--de Sitter black hole spacetime. Besides, the Hamiltonians are elliptic in the region $(r_{\textnormal{EH}}, r_{\textnormal{CH}}) \times (0, \pi) \times [0, 2 \pi)$ between the horizons and hyperbolic inside the Schwarzschild-type black hole event horizon and outside the de Sitter-type cosmological event horizon in $(0, r_{\textnormal{EH}}) \cup (r_{\textnormal{CH}}, \infty) \times (0, \pi) \times [0, 2 \pi)$ with respect to any $\boldsymbol{\eta} \in T^{\star}_{\boldsymbol{y}} \mathfrak{N} \backslash \{\boldsymbol{0}\}$ fulfilling both Inequality (\ref{SdSI1}) and Inequality (\ref{SdSI2}) for all $\boldsymbol{\xi} \in T^{\star}_{\boldsymbol{y}} \mathfrak{N} \backslash \{\boldsymbol{0}\}$.
\end{Example}	
\begin{Example}[Taub--NUT black hole spacetime]
\noindent Finally, for describing the Taub--NUT black hole spacetime \cite{Taub, NUT}, which is usually considered as a simple rotating generalization of the standard Schwarzschild black hole spacetime, we use advanced Eddington--Finkelstein-type coordinates $(\tau, r, \vartheta, \varphi) \in \mathbb{R} \times \mathbb{R} \times (0, \pi) \times [0, 2 \pi)$ (see, e.g., \cite[Chapter 12.1.2]{GriffithsPodolsky} and \cite{KagramanovaEtAl}). In these coordinates, the Taub--NUT metric is time-independent and regular on the entire black hole spacetime and can be written as   
\begin{equation*}
\begin{split}	
\boldsymbol{g} & = 4 l^2 \biggl[1 - \frac{2 (M r + l^2)}{r^2 + l^2}\biggr] \bigl(\textnormal{d}\tau - \cos(\vartheta) \, \textnormal{d}\varphi\bigr) \otimes \bigl(\textnormal{d}\tau - \cos(\vartheta) \, \textnormal{d}\varphi\bigr) - 2 l \bigl[\bigl(\textnormal{d}\tau - \cos(\vartheta) \, \textnormal{d}\varphi\bigr) \otimes \textnormal{d}r \\[0.25cm]
& \hspace{0.45cm} + \textnormal{d}r \otimes \bigl(\textnormal{d}\tau - \cos(\vartheta) \, \textnormal{d}\varphi\bigr)\bigr] - [r^2 + l^2] \, \boldsymbol{g}_{S^2} \, ,
\end{split}
\end{equation*}
where $M \in \mathbb{R}_{> 0}$ is once again interpreted as the (ADM) mass of the black hole and the so-called NUT parameter $l \in \mathbb{R}$ can be regarded as a measure of its angular momentum. Then, inserting the corresponding inverse induced metric on $\mathfrak{N}$
\begin{equation*}		
\boldsymbol{g}^{- 1}_{\mathfrak{N}} = - \biggl[1 - \frac{2 (M r + l^2)}{r^2 + l^2}\biggr] \, \partial_r \otimes \partial_r - \frac{1}{r^2 + l^2} \, \boldsymbol{g}^{- 1}_{S^2} 
\end{equation*}
into Inequalities (\ref{EC}) and (\ref{RZ}) yields 
\begin{equation} \label{TNI1} 
[r^2 - 2 M r - l^2] \, \xi^2_r + \xi^2_{\vartheta} + \textnormal{csc}^2(\vartheta) \, \xi^2_{\varphi} \not= 0 
\end{equation}
and 
\begin{equation} \label{TNI2} 
- [r^2 - 2 M r - l^2] \, \bigl[\sin^2(\vartheta) \, (\eta_r \xi_{\vartheta} - \eta_{\vartheta} \xi_r)^2 + (\eta_r \xi_{\varphi} - \eta_{\varphi} \xi_r)^2\bigr] \geq (\eta_{\vartheta} \xi_{\varphi} - \eta_{\varphi} \xi_{\vartheta})^2 \, .
\end{equation}
Evaluating these inequalities for all $\boldsymbol{\xi}, \boldsymbol{\eta} \in T^{\star}_{\boldsymbol{y}} \mathfrak{N} \backslash \{\boldsymbol{0}\}$, one finds two radial locations, namely 
\begin{equation*}
r_{1, 2} = M \pm \sqrt{M^2 + l^2} =: r_l^{\pm}\, ,
\end{equation*}
at which they are not satisfied and therefore the Hamiltonians undergo ellipticity-hyperbolicity transitions. Hence, the Taub--NUT black hole spacetime has stationary limit surfaces at $\mathbb{R} \times \{r_l^{\pm}\} \times (0, \pi) \times [0, 2 \pi)$. Moreover, in the region $(- \infty, r_l^-) \cup (r_l^+, \infty) \times (0, \pi) \times [0, 2 \pi)$ the Hamiltonians are elliptic, whereas in the region $(r_l^-, r_l^+) \times (0, \pi) \times [0, 2 \pi)$ they are hyperbolic with respect to any $\boldsymbol{\eta} \in T^{\star}_{\boldsymbol{y}} \mathfrak{N} \backslash \{\boldsymbol{0}\}$ satisfying Inequality (\ref{TNI1}) and Inequality (\ref{TNI2}) for all $\boldsymbol{\xi} \in T^{\star}_{\boldsymbol{y}} \mathfrak{N} \backslash \{\boldsymbol{0}\}$.
\end{Example}

\section{Event Horizons and Black Hole Entropy} \label{SectionIV}

\noindent Since the stationary limit surfaces of black hole spacetimes with static regions coincide with their event horizons, our functional analytic detection method can be used to locate (slices of) this particular type of event horizon as shown in detail in Example \ref{SdS} in the preceding section for the black hole and cosmological event horizons of the Schwarzschild--de Sitter black hole spacetime. Moreover, it gives rise to the following functional analytic definition of an event horizon in a black hole spacetime with static regions.
\begin{Def}[Event horizon]
An event horizon $\mathfrak{H}$ of a black hole spacetime with static regions is any hypersurface on which the Dirac, Klein--Gordon, Maxwell, and Fierz--Pauli Hamiltonians undergo ellipticity-hyperbolicity transitions. For the case of an asymptotically flat black hole spacetime with static regions, an event horizon can be explicitly defined as any hypersurface of the form\footnote{Here, we consider only ordinary black holes, which do not feature more than two event horizons. Other, more exotic (multi-horizon) black holes with more than two event horizons, such as the ones studied in \cite{NojiriOdintsov}, are not taken into account.}
\begin{equation*}
\mathfrak{H} = \mathbb{R} \times \bigl\{\textnormal{ext}\bigl(\textnormal{det}(\sigma_{\mathcal{H}})^{-1}(0)\bigr)\bigr\} \times (0, \pi) \times [0, 2 \pi) \, ,
\end{equation*}
where $\textnormal{det}(\sigma_{\mathcal{H}})^{-1}(0)$ denotes the zero sets of the determinants of the principal symbols of the Hamiltonians and $\textnormal{ext}(\, . \,) \in \bigl\{\textnormal{min}(\, . \,), \textnormal{max}(\, . \,)\bigr\}$. 
\end{Def}

Based on this relational notion of an event horizon, we may propose an alternative concept of black hole entropy for asymptotically flat black hole spacetimes with static regions.
\begin{Prp}[Entropy]
The entropy $S_{\textnormal{BH}}$ of an asymptotically flat black hole spacetime with static regions is proportional to the squares of the largest zeros of the determinantial polynomials of the principal symbols of the Dirac, Klein--Gordon, Maxwell, and Fierz--Pauli Hamiltonians, i.e., 
\begin{equation} \label{BHentropy}
S_{\textnormal{BH}} = \mathscr{C}_0 \, \textnormal{max}\bigl(\textnormal{det}(\sigma_{\mathcal{H}})^{-1}(0)\bigr)^2 \, ,
\end{equation}
where $\mathscr{C}_0 \in \mathbb{R}_{> 0}$ is a constant. 
\end{Prp}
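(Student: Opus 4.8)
The plan is to show that Equation (\ref{BHentropy}) is nothing but a rewriting of the Bekenstein--Hawking area law in terms of the functional analytic horizon detector developed above, which simultaneously proves the proportionality and fixes the constant $\mathscr{C}_0$. The logical input is the event horizon characterization just established: for an asymptotically flat black hole spacetime with static regions the outer event horizon coincides with the outermost stationary limit surface, which by the detection method of Section \ref{SectionIII} is located at the largest zero of the determinantial polynomial of the principal symbol of the Hamiltonians, i.e.\ at $r_+ := \textnormal{max}\bigl(\textnormal{det}(\sigma_{\mathcal{H}})^{-1}(0)\bigr)$. This is precisely the $\textnormal{ext} = \textnormal{max}$ branch of the preceding event horizon definition, and it is the only zero relevant to the thermodynamics, since the entropy is assigned to the outer horizon rather than to any inner (Cauchy) horizon.

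First I would recall the Bekenstein--Hawking law $S_{\textnormal{BH}} = A_{\mathfrak{H}}/4$ in geometrized units $G = c = \hbar = k_{\textnormal{B}} = 1$, where $A_{\mathfrak{H}}$ is the area of a spacelike cross-section of the outer event horizon $\mathfrak{H}$. The problem then reduces to expressing $A_{\mathfrak{H}}$ through $r_+$. By the uniqueness theorem for static, asymptotically flat black holes, the spacetime is either of Schwarzschild or of Reissner--Nordstr\"om type, so in the advanced Eddington--Finkelstein coordinates used throughout Section \ref{SectionIIIB} the induced metric on the horizon cross-section $\{r = r_+\} \cap \mathfrak{N}$ takes the form $r_+^2 \, \boldsymbol{g}_{S^2}$. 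Integrating the associated area element over the unit $2$-sphere yields $A_{\mathfrak{H}} = 4 \pi r_+^2$, and hence $S_{\textnormal{BH}} = \pi r_+^2$. Substituting $r_+ = \textnormal{max}\bigl(\textnormal{det}(\sigma_{\mathcal{H}})^{-1}(0)\bigr)$ produces Equation (\ref{BHentropy}) with the explicit value $\mathscr{C}_0 = \pi > 0$.

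The main obstacle is to guarantee that the coordinate value $r_+$ returned by the detector equals the \emph{areal} radius of the horizon, since the determinantial polynomials are coordinate dependent (cf.\ the final paragraph of Section \ref{SectionIIIA}) and only the location of the outermost zero, not its bare numerical value, is invariantly meaningful. For the canonical static black holes this identification is immediate, because the angular block of the metric is manifestly $r^2 \boldsymbol{g}_{S^2}$; for a more general admissible chart one must either restrict to coordinate systems in which $r$ is the areal radius or absorb the residual conformal rescaling of the angular part into $\mathscr{C}_0$, which is why the statement asserts only proportionality rather than a fixed numerical constant. A secondary, purely bookkeeping point is to verify that the zeros of the determinantial polynomial computed from $\sigma_{\mathcal{H}_{\mathcal{D}}}$ and from the Agmon--Douglis--Nirenberg matrix of $\mathcal{H}_{\textnormal{KG}}$ coincide as radial loci; this follows at once from Equations (\ref{detHD}) and (\ref{detHKG}), both of which vanish exactly where $g^{j k} \xi_j \xi_k = 0$, so that all four Hamiltonians furnish the same horizon radius and the proposition is independent of the spin of the field used to probe it.
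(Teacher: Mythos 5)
Your argument is essentially identical to the paper's own proof: both identify $\textnormal{max}\bigl(\textnormal{det}(\sigma_{\mathcal{H}})^{-1}(0)\bigr)$ with the outer event horizon radius $r^{+}_{\mathfrak{H}}$ and then match Equation (\ref{BHentropy}) to the Bekenstein--Hawking law $S_{\textnormal{BH}} = A/4$ with $A = 4\pi (r^{+}_{\mathfrak{H}})^2$, fixing $\mathscr{C}_0 = \pi$. Your additional remarks on the areal-radius identification and on the spin-independence of the zero locus are sensible refinements but do not change the route.
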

\begin{proof}
For every asymptotically flat black hole spacetime with static regions, the largest zeros of the determinantial polynomials of the principal symbols of the Hamiltonians occur at the location of the outer event horizon, that is, $\textnormal{max}\bigl(\textnormal{det}(\sigma_{\mathcal{H}})^{-1}(0)\bigr) = r^{+}_{\mathfrak{H}}$. Setting the value of the proportionality constant to $\mathscr{C}_0 = \pi$, the entropy specified in Equation (\ref{BHentropy}) is in accordance with the usual general relativistic black hole entropy expression $S_{\textnormal{BH}} = A/4$, where $A = 4 \pi (r^{+}_{\mathfrak{H}})^2$ is the outer event horizon area. 	
\end{proof}
\noindent
This particular concept of black hole entropy conveys that the farther away the outermost locus where the Hamiltonians go through an ellipticity-hyperbolicity transition is situated from a black hole's center, which relates to a more extensive sphere of action of the black hole, the larger is its entropy. 

\vspace{0.15cm}

\section*{Acknowledgments}
\noindent The author is grateful to Katharina Proksch and the Lichtenberg Group for History and Philosophy of Physics of the University of Bonn for useful discussions and comments on the first draft of this paper. This work was partially supported by the DFG Research Group FOR 2495 and via the research project MTM2016-78807-C2-1-P funded by MINECO and ERDF. 

\vspace{0.15cm}


\end{document}